\theoremstyle{definition}
\newtheorem{definition}{Definition}[section]
\theoremstyle{plain}
\newtheorem{theorem}{Theorem}
\newtheorem{lemma}{Lemma}
\newtheorem{proposition}{Proposition}
\newcommand{\defeq}{\overset{\underset{\mathrm{def}}{}}{=}}
\newcommand{\N}{\mathbb{N}}
\newcommand{\B}{\mathbb{B}}
\title{On the complexity of SAT}
\author{Fabio Romano}
\begin{document}

\maketitle

\begin{abstract}
In this paper, we prove that no deterministic algorithm can solve SAT in polynomial time in the number of boolean variables.
\end{abstract}

\section{First definitions}

\begin{definition}[Set of input symbols $\Sigma$]
	The set of \emph{input symbols} $\Sigma$ for SAT language is made-up by:
	
	\begin{itemize}
		\item
		The symbols $x, 0, 1$.
		
		\item
		Binary operators $\land$ and $\lor$, standing for the logical AND and OR of two expressions.
		
		\item
		Unary operator $\neg$, standing for logical negation.
		
		\item
		Parentheses to group operators and operands, if necessary to alter the default precedence of operators: $\neg$ highest, then $\land$ and finally $\lor$.
	\end{itemize}
	
	Cf. \cite[sect. 10.2]{hopcroft:computation}.
\end{definition}

\begin{definition}[Boolean variables]
	If $w \in \{0, 1\}^*$ is a string starting with 1, then $xw$ is a boolean variable, where $xw$ is the concatenation of symbol $x$ with $w$. $w$ is the binary representation of the index of variable $xw$.
	
	Cf. \cite[par. 10.2.2]{hopcroft:computation}.
\end{definition}

\begin{definition}[Boolean expressions]
	\mbox{}
	
	\begin{itemize}
		\item
		Every boolean variable is a boolean expression.
		
		\item
		If $A$ and $B$ are boolean expressions, then $(A)$, $\neg A$, 	$A \land B$ and $A \lor B$ are boolean expressions.
	\end{itemize}
	
	Cf. \cite[par. 10.2.1]{hopcroft:computation}.
\end{definition}

\begin{definition}[Length of a boolean expression]
	The length of a boolean expression $E$ is the number of
	positions for symbols in $E$. We denote the length of $E$ with $|E|$.
\end{definition}

\begin{definition}[Set of boolean expressions not longer than $n$] \label{def:bool_exp_leq_n}
	For every $n \in \N$, the set of boolean expressions not longer than $n$ is defined as follows:
	
	\[
	BoolExp_\leq(n) \defeq \{E  \mid E \text{ is a boolean expression s. t. } |E| \leq n\}
	\]
	
	From now on, we assume that $0 \notin \N$.
\end{definition}

\begin{definition}[Truth assignment]
	A \emph{truth assignment} for a given boolean expression $E$ assigns either 1 or 0 to each of the variables mentioned in $E$. The \emph{value} of expression $E$ given a truth assignment $T$, denoted $E(T)$, is the result of evaluating $E$ with each variable $x$ replaced by the value $T(x)$ that $T$ assigns to $x$ (cf. \cite[par. 10.2.1]{hopcroft:computation}).
\end{definition}

\begin{definition}[Satisfiability]
	A truth assignment $T$ \emph{satisfies} a boolean expression $E$ if $E(T) = 1$; i.e., the
	truth assignment $T$ makes expression $E$ true. A boolean expression $E$ is said to be \emph{satisfiable} if there exists at least one truth assignment $T$ that satisfies $E$ (cf. \cite[par. 10.2.1]{hopcroft:computation}).
\end{definition}

\begin{definition}[$\text{SAT} \subseteq \Sigma^*$]
	\emph{SAT} is the set of all and only boolean expressions that are satisfiable (cf. \cite[par. 10.2.1]{hopcroft:computation}).
\end{definition}

\begin{definition}[Language of a dTM]
	Given a deterministic TM $M$, the language of $M$ $L(M)$ is the set of coded input strings accepted by $M$ (cf. \cite[sect. 8.2]{hopcroft:computation}).
\end{definition}

\begin{definition} [Time complexity] \label{def:time_complexity}
	A dTM $M$ is said to be of time complexity $T(w)$ (or to have running time $T(w)$) if whenever $M$ is given an input $w$, $M$ halts after making at most $T(w)$ moves.
\end{definition}

\begin{definition}[The class $\mathcal P$] \label{def:class_P}
	We say a language $L \in \mathcal P$ iff there is some polynomial $P(n)$ such that $L = L(M)$ for some deterministic TM of time complexity $P(n)$ in the length of input (cf. \cite[par. 10.1.1]{hopcroft:computation}).
\end{definition}

\section{Issues with definition of complexity for SAT}

At first sight, it might be thought that $\text{SAT} \in \mathcal P$ is a sufficient condition to affirm that SAT is a tractable problem, in fact, up to now it has been so. However, a previous attempt we made to prove that  $\text{SAT} \notin \mathcal P$ has pointed out some problems in this regard.

The fundamental trouble concerns the concept of time complexity: by definition \ref{def:time_complexity}, the running time of an algorithm is a function $T(w)$ on the input. So, if $\text{SAT} \in \mathcal P$, then, by definition \ref{def:class_P}, we would have a SAT solver that runs in polynomial time $P(|w|)$ in the length of an input $w$. But despite this, we don't know anything about relation between the length of $w$ and the number of its boolean variables.

Since a boolean expression always represents a boolean function, the shortest expression that can represent a boolean function of $n$ variables could virtually have a length $l$ that is exponential in $n$. So, if $l = 2^n$, the SAT solver can run in time $P(2^n)$, which is obviously exponential in the number of variables, although it is polynomial in the length of input.

Therefore, it follows that the definition of $\mathcal P$ is not satisfying enough to ensure that SAT is tractable, so SAT could be exponential even if $\text{SAT} \in \mathcal P$. Later in the paper, we will see that this is how things really are.

For this reason, we have to attack the problem with techniques of boolean function theory.

\section{Boolean Function Theory}

\begin{definition}[Boolean domain]
	$\B \defeq \{0, 1\}$ is the set of truth values.
\end{definition}

\begin{definition}[Boolean function]
	Given an $n \in \N$, a boolean function $f$ is a total function $f \colon \B^n \to \B$, i.e. it assigns to each $n$-tuple of elements of $\B$ an element of $\B$.
\end{definition}

\begin{definition}[Equivalence of boolean functions] \label{def:bool_func_equiv}
	Let $f \colon \B^n \to \B$ and $g \colon \B^m \to \B$ be two boolean functions.
	
	\begin{itemize}
		\item If $n \geq m$:
		
		$f \equiv g$ iff, for every $\langle v_1, \dots, v_n \rangle \in \B^n,\ f(v_1, \dots, v_n) = g(v_1, \dots, v_m)$ holds.
		
		\item If $n < m$:
		
		$f \equiv g$ iff $g \equiv f$.
	\end{itemize}
	
	If $f \equiv g$, we say that $f$ and $g$ are equivalent. It's trivial to prove that $\equiv$ is an equivalence relation.
\end{definition}

\begin{definition}[Interpretation of boolean expressions]
	Given a boolean expression $E$ that contains $x_{k_1}, \dots, x_{k_n}$ distinct variables sorted by index, $I_E \colon \B^{k_n} \to \B$ is a boolean function such that:
	
	\[
	I_E(v) \defeq E(T_v)
	\]
	
	where $v = \langle v_1, \dots, v_{k_n} \rangle \in \B^{k_n}$ and $T_v \colon \{x_{k_1}, \dots, x_{k_n}\} \to \mathcal \B$ is the truth assignment such that, for every $i \in \{1, \dots, n\}$, $T_v(x_{k_i}) = v_{k_i}$.
	
	We say that $I_E$ is the interpretation of $E$ as a boolean function.
\end{definition}

\begin{definition}[Length of a boolean function] \label{def:func_length}
	Given a boolean function $f \colon \mathcal \B^n \to \mathcal \B$, the length of $f$ is defined as follows:
	
	\[
	length(f) \defeq \min\{|E| \mid I_E \equiv f,\ E \text{ is a boolean expression}\}
	\]
	
	That is, the length of a boolean function $f$ is the length of the shortest boolean expression $E$, such that its interpretation is equivalent to $f$.
\end{definition}

\begin{definition}[Set of $n$-ary boolean functions] \label{def:n-ary_bool_func}
	For every $n \in \N$, the set of all $n$-ary boolean functions is defined as follows:
	
	\[
	BoolFunctions(n) \defeq \{f \colon \mathcal \B^n \to \mathcal \B \mid f^{-1}(\B) = \B^n\}
	\]
	
	The condition $f^{-1}(\B) = \B^n$ stands to mean that $f$ is a total function, i.e. the inverse image of $f$ coincides with its domain.
\end{definition}

\section{Proof}

\begin{proposition} \label{prop:bool_exp_subset}
	For every $n \in \N$:
	
	\[
	BoolExp_\leq(n) \subseteq \bigcup_{k = 0}^n \Sigma^k
	\]
\end{proposition}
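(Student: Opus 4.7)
The plan is to reduce the claim to the obvious fact that every boolean expression is literally a string over $\Sigma$, and then use the length bound to place each such string in the appropriate component of the union.

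First I would fix $n \in \N$ and pick an arbitrary $E \in BoolExp_\leq(n)$. By definition of $BoolExp_\leq(n)$, $E$ is a boolean expression with $|E| \leq n$. The core step is to establish that every boolean expression is an element of $\Sigma^*$. I would prove this by structural induction on the inductive definition of boolean expressions. In the base case, every boolean variable has the form $xw$ with $w \in \{0,1\}^*$; since $x$, $0$ and $1$ are all symbols of $\Sigma$, the concatenation $xw$ is a finite string of symbols of $\Sigma$. In the inductive step, assuming $A, B \in \Sigma^*$, the expressions $(A)$, $\neg A$, $A \land B$ and $A \lor B$ are still strings over $\Sigma$, because the parentheses, $\neg$, $\land$ and $\lor$ all belong to $\Sigma$ as well.

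Once we know $E \in \Sigma^*$, the definition of length of a boolean expression coincides with the length of $E$ as a string, so setting $k \defeq |E|$ we get $E \in \Sigma^k$ with $0 \leq k \leq n$. Hence $E \in \bigcup_{k=0}^n \Sigma^k$, which is exactly what we need, since $E$ was arbitrary.

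I do not expect any serious obstacle: the proposition is essentially a bookkeeping statement making explicit that the syntactic objects introduced by the grammar of boolean expressions live inside $\Sigma^*$. The only care required is to be honest about the structural induction on the definition of boolean expressions, rather than treating the containment $BoolExp \subseteq \Sigma^*$ as self-evident from the notation.
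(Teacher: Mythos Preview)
Your proposal is correct and follows essentially the same route as the paper: pick $E \in BoolExp_\leq(n)$, observe $E \in \Sigma^*$ with $|E| \leq n$, and conclude $E \in \Sigma^{|E|} \subseteq \bigcup_{k=0}^n \Sigma^k$. The only difference is that you spell out the structural induction justifying $E \in \Sigma^*$, whereas the paper simply asserts this fact without proof.
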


\begin{proof}
	If $E \in BoolExp_\leq(n)$, then, by def. \ref{def:bool_exp_leq_n}, $E$ is a boolean expression such that $|E| \leq n$, and since $E \in \Sigma^*$, exists a $k \in \{0, \dots, n\}$ such that $E \in \Sigma^k$, so $E \in \bigcup_{k = 0}^n \Sigma^k$.
\end{proof}

\begin{lemma} \label{lemma:card_compare}
	If $P$ is a polynomial, then exists an $N_P \in \N$ such that:
	
	\[
	|BoolExp_\leq(P(N_P))| < |BoolFunctions(N_P)|
	\]
\end{lemma}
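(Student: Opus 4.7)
The plan is to bound each side of the desired inequality and then exploit the gap between a polynomial and a double exponential in $N_P$.

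First, I would give an explicit upper bound on the left-hand side. By Proposition \ref{prop:bool_exp_subset}, $BoolExp_\leq(P(N)) \subseteq \bigcup_{k=0}^{P(N)} \Sigma^k$, so
\[
|BoolExp_\leq(P(N))| \leq \sum_{k=0}^{P(N)} |\Sigma|^k = \frac{|\Sigma|^{P(N)+1}-1}{|\Sigma|-1}
\]
where $|\Sigma|$ is a fixed constant (the eight symbols $x, 0, 1, \land, \lor, \neg, (, )$). Call this constant $c$, so the left-hand side is at most $c^{P(N)+1}$.

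Second, I would compute the right-hand side exactly. An $N$-ary boolean function is determined by its value on each of the $2^N$ inputs in $\B^N$, and each value is in $\B$, so by Definition \ref{def:n-ary_bool_func},
\[
|BoolFunctions(N)| = 2^{2^N}.
\]

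The inequality we need, $c^{P(N)+1} < 2^{2^N}$, is equivalent, after taking $\log_2$, to $(P(N)+1)\log_2 c < 2^N$. The left side is a polynomial in $N$ while the right side is exponential in $N$, so the inequality holds for all sufficiently large $N$; in particular, one may pick any $N_P$ for which $2^{N_P} > (P(N_P)+1)\log_2 c$, whose existence follows from the standard fact that $\lim_{N\to\infty} 2^N / Q(N) = +\infty$ for every polynomial $Q$. Choosing such an $N_P$ and chaining the bounds yields
\[
|BoolExp_\leq(P(N_P))| \leq c^{P(N_P)+1} < 2^{2^{N_P}} = |BoolFunctions(N_P)|,
\]
which is the claim.

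There is no real obstacle here; the only delicate point is making explicit that $|\Sigma|$ is a fixed finite constant so that $c^{P(N)+1}$ is genuinely of the form ``exponential of a polynomial'' and therefore dominated by $2^{2^N}$. Everything else is a routine asymptotic comparison plus the counting of boolean functions.
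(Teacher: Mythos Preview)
Your proof is correct and follows essentially the same approach as the paper: bound $|BoolExp_\leq(P(N))|$ above via Proposition~\ref{prop:bool_exp_subset} and the geometric series $\sum_{k=0}^{P(N)}|\Sigma|^k$ with $|\Sigma|=8$, compute $|BoolFunctions(N)|=2^{2^N}$ directly, take logarithms, and compare a polynomial against $2^N$. The only cosmetic difference is that the paper phrases the comparison in big-$O$ language and appeals to ``interpolation,'' whereas you keep the explicit bound $c^{P(N)+1}$ and invoke the limit $2^N/Q(N)\to\infty$; the underlying argument is identical.
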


\begin{proof}
	Let's consider the cardinality of $BoolExp_\leq(P(n))$:
	\[
		\begin{split}
			|BoolExp_\leq(P(n))|
			&\leq \left| \bigcup_{k = 0}^{P(n)} \Sigma^k \right| \text{ by prop. \ref{prop:bool_exp_subset}} \\
			&= \sum_{k = 0}^{P(n)} |\Sigma|^k \text{ since, for each } i \neq j,\ \Sigma^i \cap \Sigma^j = \emptyset \\
			&= \frac{|\Sigma|^{P(n)+1}-1}{|\Sigma|-1} \text{ since is a geometric series} \\
			&= \frac{8^{P(n)+1}-1}{8-1} \text{ since $\Sigma$ contains 8 elements} \\
			&= \frac{8}{7} \cdot 8^{P(n)} - \frac{1}{7} \\
			&= O(8^{P(n)})
		\end{split}
	\]
	
	So, $|BoolExp_\leq(P(n))| = O(8^{P(n)})$.
	
	Now, let's consider the set $BoolFunctions(n)$. By def. \ref{def:n-ary_bool_func}, $BoolFunctions(n)$ is the set of $n$-ary boolean functions. Let be $f \in BoolFunctions(n)$.
	
	Since $|\B^n|=2^n$, then the cardinality of the domain of $f$ is $2^n$, and since, for each $n$-tuple of $\B^n$, $f$ can assign either 0 or 1, then the number of different ways $f$ can assign boolean values to the $n$-tuples is $2^{|\B^n|} = 2^{2^n}$, so $|BoolFunctions(n)| = 2^{2^n}$.
	
	Now, let's compare the growth of cardinalities of $BoolExp_\leq(P(n))$ and $BoolFunctions(n)$ using their logarithms to base 2, which are roughly the number of bits necessary to represent the cardinalities.
	
	\[
	\log |BoolExp_\leq(P(n))| = O(\log 8^{P(n)}) = O(P(n) \cdot \log 8) = O(P(n))
	\]
	
	\[
	\log |BoolFunctions(n)| = \log 2^{2^n} = 2^n \cdot \log 2 = 2^n
	\]
	
	This means that the number of bits to represent $|BoolFunctions(n)|$ grows exponentially in the arity of boolean functions, while the number of bits to represent $|BoolExp_\leq(P(n))|$ grows polynomially. So, by interpolation, we can always find an $N_P \in \N$ such that:
	
	\[
	|BoolExp_\leq(P(N_P))| < |BoolFunctions(N_P)| \qedhere
	\]
\end{proof}

\begin{proposition} \label{prop:diff_exp_for_diff_func}
	If $f \colon \B^n \to \B$ and $g \colon \B^n \to \B$ are two boolean functions s. t. $f \neq g$, and $F, G$ two boolean expressions s. t. $I_F \equiv f$ and $I_G \equiv g$, then $F \neq G$.
\end{proposition}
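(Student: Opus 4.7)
The natural approach is to argue by contrapositive: assume $F = G$ and derive $f = g$, contradicting $f \neq g$.

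First I would observe that $I$ is a well-defined operation on boolean expressions: given the same syntactic expression, it produces the same boolean function. So from the syntactic equality $F = G$ we immediately get the functional equality $I_F = I_G$.

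Next I would chain the equivalences. By hypothesis $f \equiv I_F$ and $I_G \equiv g$, and from the previous step $I_F = I_G$ (which trivially gives $I_F \equiv I_G$). Since $\equiv$ is an equivalence relation (as noted in Definition~\ref{def:bool_func_equiv}), transitivity yields $f \equiv g$.

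Finally I would unfold Definition~\ref{def:bool_func_equiv} in the case where both functions share the same arity $n$: the clause $n \geq m$ applies with $n = m$, so $f \equiv g$ means that $f(v_1, \dots, v_n) = g(v_1, \dots, v_n)$ for every $\langle v_1, \dots, v_n\rangle \in \B^n$. Two total functions with the same domain and codomain that agree pointwise are equal, hence $f = g$, contradicting $f \neq g$. Therefore $F \neq G$.

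There is no real obstacle here; the only subtle point is making sure that the step from $f \equiv g$ to $f = g$ uses that $f$ and $g$ have the same arity, so that Definition~\ref{def:bool_func_equiv} collapses to pointwise equality rather than the asymmetric condition used when the arities differ.
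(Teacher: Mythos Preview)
Your argument is correct and matches the paper's proof essentially step for step: assume $F = G$, deduce $I_F = I_G$, use transitivity of $\equiv$ to get $f \equiv g$, and then invoke the equal-arity case of Definition~\ref{def:bool_func_equiv} to conclude $f = g$, a contradiction. The only difference is that you spell out the transitivity and the ``same arity $\Rightarrow$ pointwise equality'' step more explicitly than the paper does.
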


\begin{proof}
	If $F = G$, then $I_F = I_G$, so, by equivalence, $I_F \equiv f$ and $I_G \equiv g$ entail $f \equiv g$, and since $f$ and $g$ have the same arity, by def. \ref{def:bool_func_equiv} $f = g$ holds, contradicting $f \neq g$.
\end{proof}

\begin{lemma} \label{lemma:func_length_greater_poly}
	If $P$ is a polynomial, then exists a boolean function $\mathcal F \colon \B^{N_P} \to \B$ such that $P(N_P) < length(\mathcal F)$.
\end{lemma}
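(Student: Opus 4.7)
The plan is to prove this by contradiction using a pigeonhole/counting argument combined with lemma \ref{lemma:card_compare}. The key insight is that there are simply more $N_P$-ary boolean functions than there are boolean expressions of length at most $P(N_P)$, so some function must require a longer expression.

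First I would suppose, for contradiction, that every $f \in BoolFunctions(N_P)$ satisfies $length(f) \leq P(N_P)$. By definition \ref{def:func_length}, for each such $f$ there exists at least one boolean expression $E_f$ with $|E_f| \leq P(N_P)$ and $I_{E_f} \equiv f$; using this, I would define a choice function $\Phi \colon BoolFunctions(N_P) \to BoolExp_\leq(P(N_P))$ that sends each $f$ to one such representative $E_f$.

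Next I would argue that $\Phi$ is injective. Indeed, if $f \neq g$ are two distinct functions in $BoolFunctions(N_P)$, both have arity $N_P$, so proposition \ref{prop:diff_exp_for_diff_func} immediately yields $\Phi(f) = E_f \neq E_g = \Phi(g)$. The existence of this injection gives
\[
    |BoolFunctions(N_P)| \leq |BoolExp_\leq(P(N_P))|,
\]
which directly contradicts lemma \ref{lemma:card_compare}. Hence the assumption fails and some $\mathcal F \in BoolFunctions(N_P)$ must satisfy $length(\mathcal F) > P(N_P)$, as required.

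I don't expect a genuine obstacle here since the hard quantitative work has been done in lemma \ref{lemma:card_compare}; the only subtlety worth being careful about is the interface between the two notions. Specifically, I should not confuse equality of expressions with equality of functions (distinct expressions may interpret to the same function), which is exactly why proposition \ref{prop:diff_exp_for_diff_func} is phrased in the direction used above, and I should also make sure the chosen $E_f$ really lands in $BoolExp_\leq(P(N_P))$, which follows immediately from $length(f) \leq P(N_P)$ together with the $\min$ in definition \ref{def:func_length}.
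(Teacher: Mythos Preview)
Your proof is correct and follows essentially the same counting argument as the paper, relying on lemma~\ref{lemma:card_compare} for the cardinality gap and proposition~\ref{prop:diff_exp_for_diff_func} for well-definedness/injectivity. The only cosmetic difference is direction: you construct an injection $\Phi\colon BoolFunctions(N_P)\to BoolExp_\leq(P(N_P))$ under the contradiction hypothesis, whereas the paper defines the reverse partial map $\mathcal M\colon BoolExp_\leq(P(N_P))\to BoolFunctions(N_P)$ and shows it cannot be surjective---two dual packagings of the same pigeonhole step.
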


\begin{proof}
	By lemma \ref{lemma:card_compare}, there is an $N_P \in \N$ such that:
	
	\begin{equation} \label{eq:card_compare}
		|BoolExp_\leq(P(N_P))| < |BoolFunctions(N_P)|
	\end{equation}
	
	Let be $g, h \in BoolFunctions(N_P)$ s. t. $g \neq h$, and $G, H$ two boolean expressions s. t. $I_G \equiv g$ and $I_H \equiv h$. Since $g$ and $h$ are both $N_P$-ary functions, by prop. \ref{prop:diff_exp_for_diff_func} $G \neq H$.
	
	This entails that we can define a partial function $\mathcal M \colon BoolExp_\leq(P(N_P)) \to BoolFunctions(N_P)$ as follows:
	
	\begin{equation} \label{eq:def_M}
		\mathcal M(E) \defeq f, \text{ such that } I_E \equiv f \in BoolFunctions(N_P)
	\end{equation}
	
	Now, suppose by contradiction that, for every $f \in BoolFunctions(N_P)$, there is an $E_f \in BoolExp_\leq(P(N_P))$ such that $f = \mathcal M(E_f)$. Then, $\mathcal M$ would be surjective, but this involves that $|BoolExp_\leq(P(N_P))| \geq |BoolFunctions(N_P)|$, contradicting \eqref{eq:card_compare}.
	
	This means that $\mathcal M$ cannot be surjective, so, by \eqref{eq:def_M}, exists a boolean function $\mathcal F \colon \B^{N_P} \to \B$ s. t., for every $E \in BoolExp_\leq(P(N_P))$, $I_E \not\equiv \mathcal F$ holds.
	
	By functional completeness of $\{\neg, \land, \lor\}$ and by def. \ref{def:func_length}, there must be a boolean expression $F$ such that $I_F \equiv \mathcal F$ and $|F| = length(\mathcal F)$. Since, for every $E \in BoolExp_\leq(P(N_P))$, $I_E \not\equiv \mathcal F$ holds, then $F \notin BoolExp_\leq(P(N_P))$, so, by def. \ref{def:bool_exp_leq_n}, $P(N_P) < |F|$, thus $|F| = length(\mathcal F)$ entails that $P(N_P) < length(\mathcal F)$.
	
	Summing up, there is a boolean function $\mathcal F \colon \B^{N_P} \to \B$ such that $P(N_P) < length(\mathcal F)$, which is the thesis of this lemma.
\end{proof}

\begin{lemma} \label{lemma:simplification}
	Le be $f \colon \B^n \to \B$ a boolean function, and $F$ a boolean expression that contains $x_{k_1}, \dots, x_{k_m}$ distinct variables sorted by index, s. t. $I_F \equiv f$ and $k_m > n$. Then, exists a boolean expression $F'$ with its variable indexes less or equal to $n$, s. t. $I_{F'} \equiv f$ and $|F'| < |F|$.
\end{lemma}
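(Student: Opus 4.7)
The plan is to obtain $F'$ by simultaneously replacing, in $F$, every occurrence of each variable $x_{k_j}$ with $k_j > n$ by the variable $x_1$. This operation is well-defined because the paper assumes $0 \notin \N$ in Definition \ref{def:bool_exp_leq_n}, so $n \geq 1$ and $x_1$ is a variable of index within the allowed range. Since a variable-for-variable substitution preserves the tree of operators and parentheses of $F$, the resulting $F'$ is automatically a well-formed boolean expression, and its set of variables is contained in $\{x_1\} \cup \{x_{k_j} \mid k_j \leq n\}$, all of index $\leq n$, as required.

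To verify that $I_{F'} \equiv f$, I would first unfold the hypothesis $I_F \equiv f$ via Definition \ref{def:bool_func_equiv}: since $I_F$ has arity $k_m > n$ and $f$ has arity $n$, equivalence forces $I_F(v_1, \ldots, v_{k_m}) = f(v_1, \ldots, v_n)$ for every $v \in \B^{k_m}$, so $I_F$ is independent of each coordinate $v_{k_j}$ with $k_j > n$. Then, for any truth assignment $T'$ to the variables of $F'$, evaluating $F'$ under $T'$ is the same as evaluating $F$ under the assignment that agrees with $T'$ on the low-index variables and sets every high-index $x_{k_j}$ to the value $T'(x_1)$; by the independence just noted, this equals $f(v_1, \ldots, v_n)$, so $I_{F'} \equiv f$ holds (in whichever of the two cases of Definition \ref{def:bool_func_equiv} applies).

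For the length bound, the key observation is that any variable $x_{k_j}$ with $k_j > n \geq 1$ has $k_j \geq 2$, so its binary representation occupies at least two bits and the variable symbol $x_{k_j}$ has length $\geq 3$, whereas $x_1$ has length exactly $2$. Each substitution therefore removes at least one character per occurrence, and since $x_{k_m}$ is in particular replaced and appears at least once in $F$, we obtain $|F'| \leq |F| - 1 < |F|$.

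The step that requires the most care is the function-equivalence argument, because $F'$ may end up with either fewer or more distinct variables than $F$ (the substitution can both eliminate high-index variables and, if $x_1$ was not previously in $F$, introduce it as a fresh one), so the arity of $I_{F'}$ is not a priori fixed. However, the two-case structure of Definition \ref{def:bool_func_equiv} absorbs this asymmetry: in either case one only has to check that the two interpretations agree on the assignments to the first $n$ coordinates, which is precisely what the independence of $I_F$ on the high-index coordinates supplies. No real obstacle arises, provided one is careful to invoke $0 \notin \N$ to justify the very existence of $x_1$.
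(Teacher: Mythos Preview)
Your argument is correct and takes a genuinely different route from the paper. The paper substitutes the constant $0$ for every high-index variable and then applies a list of Boolean simplification rules to recover a well-formed expression (with a special patch when the result collapses to a bare $0$ or $1$), whereas you substitute the variable $x_1$ directly and never leave the class of boolean expressions. Your length bound is also cleaner: it comes straight from the fact that any index $k_j > n \geq 1$ has a binary encoding of at least two bits, so each replaced occurrence drops at least one symbol; the paper instead relies on the simplification rules being length-nonincreasing and must separately argue that the constant-patch $x_1 \land \neg x_1$ or $x_1 \lor \neg x_1$ is still shorter than the original $F$. The only minor imprecision in your write-up is the remark that $F'$ might have \emph{more} distinct variables than $F$: since at least one high-index variable is collapsed into $x_1$, the count can never increase, and in any case the arity of $I_{F'}$ (the maximal index in $F'$) is always $\leq n$, so only one branch of Definition~\ref{def:bool_func_equiv} is ever invoked. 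This does not affect the validity of your equivalence argument.
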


\begin{proof}
	Since $I_F \equiv f$, $F$ contains $k_m$ variables and $k_m > n$, then, by def. \ref{def:bool_func_equiv}, for every $\langle v_1, \dots, v_n \rangle \in \B^n$:
	\[
	f(v_1, \dots, v_n) = I_F(v_1, \dots, v_n, \underbrace{0, \dots, 0}_{k_m - n \text{ times}})
	\]
	
	Then, by def. \ref{def:bool_func_equiv}, $I_F \equiv f$ entails that, for every $\langle v_1, \dots, v_{k_m} \rangle \in \B^{k_m}$:
	
	\[
	I_F(v_1, \dots, v_{k_m}) = I_F(v_1, \dots, v_n, \underbrace{0, \dots, 0}_{k_m - n \text{ times}})
	\]
	
	This means the truth of $F$ does not depend by the values of $x_{n+1}, \dots, x_{k_m}$, so we can replace them in $F$ with 0, and then simplify $F$ until we obtain a new boolean expression, with the rules below:
	
	\[
	\frac{E} {E[\neg 1 / 0]}
	\qquad
	\frac{E} {E[\neg 0 / 1]}
	\]
	
	\[
	\frac{E} {E[A \land 1 / A]}
	\qquad
	\frac{E} {E[1 \land A / A]}
	\qquad
	\frac{E} {E[A \lor 1 / 1]}
	\qquad
	\frac{E} {E[1 \lor A / 1]}
	\]
	
	\[
	\frac{E} {E[A \land 0 / 0]}
	\qquad
	\frac{E} {E[0 \land A / 0]}
	\qquad
	\frac{E} {E[A \lor 0 / A]}
	\quad
	\frac{E} {E[0 \lor A / A]}
	\]
	
	where $E[A / B]$ is the result of replacing in $E$ every occurrence of $A$ with $B$. If simplification results in a boolean value, then $1$ can be replaced by $x_1 \lor \neg x_1$ and $0$ by $x_1 \land \neg x_1$.
	
	The result of simplifying $F$ is clearly a boolean expression $F'$ with its variable indexes less or equal to $n$, s. t. $I_{F'} \equiv f$ and $|F'| < |F|$.
\end{proof}

\begin{theorem}
	There is no dTM $M$, with polynomial time complexity in the number of boolean variables of input, such that $L(M) = \text{SAT}$.
\end{theorem}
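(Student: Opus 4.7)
The plan is a diagonalization-by-counting argument that leverages lemma \ref{lemma:func_length_greater_poly} to exhibit an input that $M$ literally does not have time to finish reading. Assume by contradiction that such a dTM $M$ exists: on any input containing $n$ boolean variables it halts within $P(n)$ moves, for some polynomial $P$. By replacing $P$ with $n \mapsto \max_{k \leq n} P(k) + c$ for a suitable constant $c$, I may and will assume $P$ is non-decreasing and large enough that two features hold simultaneously below: the function produced by the lemma is not the constant-$0$ function, and its minimum-length expression strictly exceeds $P(N_P) + 1$ in size.

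Applying lemma \ref{lemma:func_length_greater_poly} to this $P$, I obtain $N_P \in \N$ and $\mathcal F \colon \B^{N_P} \to \B$ with $length(\mathcal F) > P(N_P)$. By the calibration just made, $\mathcal F$ differs from the constant-$0$ function (whose shortest expression has bounded constant length, e.g.\ $x_1 \land \neg x_1$), so $\mathcal F$ is satisfiable. By functional completeness of $\{\neg, \land, \lor\}$ and def. \ref{def:func_length}, pick $F_0$ with $I_{F_0} \equiv \mathcal F$ and $|F_0| = length(\mathcal F)$; by lemma \ref{lemma:simplification} combined with the minimality of $|F_0|$, every variable of $F_0$ has index $\leq N_P$. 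So $F_0$ has at most $N_P$ distinct variables and $F_0 \in \text{SAT}$.

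Now $M$ on input $F_0$ must halt in at most $P(N_P)$ moves, so its head remains in cells $[-P(N_P), P(N_P)]$ and never reads any input position strictly beyond $P(N_P)$, whereas $|F_0| > P(N_P) + 1$ guarantees that $F_0$ does carry content at such positions. I then define $F'$ by keeping the first $P(N_P)+1$ characters of $F_0$ and replacing the remainder with a syntactically ill-formed tail that uses no variable symbol (e.g.\ a run of $\land$ tokens, or a stray unmatched parenthesis followed by padding): the resulting $F'$ has at most $N_P$ variables, is not a well-formed boolean expression, and therefore $F' \notin \text{SAT}$.

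The closing step is the standard observation that two deterministic computations on inputs agreeing on every cell ever reached by the head must coincide: since $F_0$ and $F'$ both carry at most $N_P$ variables, the bound $P(N_P)$ confines the head in each case to $[-P(N_P), P(N_P)]$, where the two inputs are byte-identical; a routine induction on the step count then yields that $M$ produces the same verdict on $F_0$ and $F'$, contradicting $F_0 \in \text{SAT}$ and $F' \notin \text{SAT}$. I expect the main difficulty to lie in making the construction of $F'$ fully rigorous: the invalidating tail has to provably break the grammar of boolean expressions no matter how $F_0$ happens to end around position $P(N_P)$, and must be verified to introduce no new variable; everything else is straightforward bookkeeping on top of the previously established lemmas.
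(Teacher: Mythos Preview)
Your argument is correct and follows the same route as the paper: both invoke lemma~\ref{lemma:func_length_greater_poly} and lemma~\ref{lemma:simplification} to produce a minimum-length expression $F$ with at most $N_P$ variables but $|F| > P(N_P)$, and then conclude that $M$ cannot finish reading $F$ within its time budget. The only difference is presentational: the paper simply asserts that $M$ ``is forced to read the entire input'' (hence $P(N_P) \geq |F|$, contradiction), whereas you make this step rigorous by arranging $\mathcal F$ to be satisfiable and explicitly building a fooling string $F'$ that agrees with $F_0$ on every cell $M$ can reach yet lies outside SAT.
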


\begin{proof}
	Suppose by contradiction that exists a dTM $M$, with polynomial time complexity $P(n)$ in the number of boolean variables of input, such that $L(M) = \text{SAT}$. Then, by lemma \ref{lemma:func_length_greater_poly}, there is a boolean function $f \colon \B^{N_P} \to \B$ such that $P(N_P) < length(f)$.
	
	By functional completeness of $\{\neg, \land, \lor\}$ and by def. \ref{def:func_length}, there is a boolean expression $F$ such that $I_F \equiv f$ and $|F| = length(f)$, so $P(N_P) < length(f)$ entails that $P(N_P) < |F|$.
	
	Since, by def. \ref{def:func_length}, $F$ is the shortest boolean expression such that $I_F \equiv f$, and since $f$ is an $N_P$-ary function, then, by lemma \ref{lemma:simplification}, $F$ contains at most $N_P$ variables, and since $P$ is a polynomial, $P(N_P)$ is certainly an upper bound of the running time of $M$ with $F$ as input.
	
	But, on the other hand, to verify that $F$ is a boolean expression, $M$ is forced to read the entire input, so $M$ must make at least $|F|$ moves, therefore $P(N_P) \geq |F|$, contradicting $P(N_P) < |F|$.
\end{proof}

\section{Conclusions}

Although we didn't prove that $\text{SAT} \notin \mathcal P$, anyway we understand an important fact: even if SAT were in $\mathcal P$, over a certain $n \in \N$, the quantity of $n$-ary boolean functions that are writable with polynomial length is infimum, if compared to the totality of $n$-ary boolean functions. So, even in that case, a minimal part of problem instances would be tractable, while the overwhelming majority would still be effectively unsolvable in polynomial time (in the number of variables).

\printbibliography[heading=bibintoc]

\end{document}